\documentclass[journal]{IEEEtran}
\usepackage{amsthm}
\usepackage{amssymb}
\usepackage{graphicx}
\usepackage{amsmath}
\usepackage{newtxmath}
\usepackage[cal=cm]{mathalpha} 
\usepackage{tablefootnote} 
\usepackage{threeparttable} % 在导言区添加
\usepackage{bm}
\newtheorem{theorem}{Theorem}{}
{}
{}
{}
\newtheorem{corollary}{Corollary}{}
{}

\usepackage{algorithm}
\usepackage{algpseudocode}
\usepackage{graphics}
\usepackage{epsfig}
\usepackage{multirow}
\usepackage{caption}
\usepackage{array}
\usepackage{cite}
\usepackage{stfloats}
\usepackage{subfig} % 在导言区添加
\usepackage{booktabs}
\usepackage{comment}
\usepackage{color}
\usepackage{tabu} 
\usepackage{stfloats}
\usepackage{mathtools}
\usepackage{stmaryrd} %CP Decomposition, \llbracket \rrbracket 
\usepackage[colorlinks,linkcolor=blue,anchorcolor=blue,citecolor=blue,bookmarks=true]{hyperref}
\usepackage[capitalize]{cleveref} % 导言区
\crefname{figure}{Fig.}{Figs.}   % 小写引用：\cref → "Fig. 1", "Figs. 1 and 2"
\Crefname{figure}{Fig.}{Figs.}   % 大写引用：\Cref → "Fig. 1"（句首时）
\usepackage{subcaption}
\makeatletter
\let\sum\relax
\let\prod\relax
\DeclareSymbolFont{largesymbols}{OMX}{cmex}{m}{n}
\DeclareMathSymbol{\sum}{\mathop}{largesymbols}{"50}
\DeclareMathSymbol{\prod}{\mathop}{largesymbols}{"51}
\DeclareMathOperator*{\diag}{\text{diag}}

\begin{document}
	\title{Rotatable Antenna-Enhanced Wireless Sensing with Uniform Sparse Array via Tensor Decomposition}
	
	% author names and affiliations
	% use a multiple column layout for up to three different
	% affiliations
	
	\author{\IEEEauthorblockN{
			Chengzhi Ye, \IEEEmembership{Student Member,~IEEE}, 
			Ruoyu Zhang, \IEEEmembership{Senior Member,~IEEE}, 
			Jincheng Du, %Guangyi Chen,
			Wenyan Ma,
			\IEEEmembership{Member,~IEEE},
			Qingqing Wu, 
			\IEEEmembership{Senior Member,~IEEE},
			Wen Wu, 
			\IEEEmembership{Senior Member,~IEEE},
			Rui Zhang,
			\IEEEmembership{Fellow,~IEEE}
		}
		\vspace{-0.25em}
		\thanks{\vspace{-1em}
			%This work was supported in part by the National Natural Science Foundation of China under Grant 62571248, Grant 62201266, and in part by the Undergraduate Research Training Program of Nanjing University of Science and Technology under Grant 202510288068.
			
			Chengzhi Ye, Ruoyu Zhang, Jincheng Du and Wen Wu are with the Key Laboratory of Near-Range RF Sensing ICs and Microsystems (NJUST), Ministry of Education, School of Electronic and Optical Engineering, Nanjing University of Science and Technology, Nanjing 210094, China. (e-mail: qxsycz8166@njust.edu.cn; ryzhang19@njust.edu.cn; dujincheng@njust.edu.cn;
			wuwen@njust.edu.cn). Wenyan Ma and Rui Zhang are with the Department of Electrical and Computer Engineering, National University of Singapore, Singapore 117583 (e-mail: wenyan@u.nus.edu; elezhang@nus.edu.sg). Qingqing Wu is with the Department of Electronic Engineering, Shanghai Jiao Tong University, Shanghai 200240, China (email: qingqingwu@sjtu.edu.cn)
			\textit{(Corresponding author: Ruoyu Zhang)}}
		%\thanks
		\vspace{-2em}
	}

	% use for special paper notices
	%\IEEEspecialpapernotice{(Invited Paper)}
	
	% make the title area
	\maketitle
	
	% As a general rule, do not put math, special symbols or citations
	% in the abstract
	%\begin{abstract}
	%f
	%\end{abstract}
	
	%% no keywords
	%\begin{IEEEkeywords}
	%Sparse vector coding, 
	%\end{IEEEkeywords}

	% For peer review papers, you can put extra information on the cover
	% page as needed:
	% \ifCLASSOPTIONpeerreview
	% \begin{center} \bfseries EDICS Category: 3-BBND \end{center}
	% \fi
	%
	% For peerreview papers, this IEEEtran command inserts a page break and
	% creates the second title. It will be ignored for other modes.
	\IEEEpeerreviewmaketitle

	\begin{abstract}
		In this letter, we propose a new wireless sensing system equipped with a rotatable antenna (RA) array to enhance the sensing performance of a uniform sparse array (USA). To tackle the severe spatial undersampling issues, we propose a novel tensor decomposition-based direction-of-arrival (DOA) estimation algorithm. Specifically, a synchronous multi-rotation pattern is designed for active target probing, allowing the received signals collected over multiple rotations to capture diverse spatial degrees of freedom. Subsequently, we mathematically formulate the received signals across successive rotations as a third-order tensor, and leverage the canonical polyadic decomposition to obtain the factor matrices incorporating the DOA of targets. By analyzing the extrema distribution laws of the array steering vector correlation (SVC) and the gain SVC of RAs, we further combine the array and gain factor matrices via the Kronecker product, which theoretically guarantees unambiguous DOA estimation. Simulation results demonstrate that the proposed RA-enhanced tensor decomposition-based algorithm achieves high-precision and ambiguity-free sensing performance compared to conventional uniform dense arrays and omnidirectional antenna systems.
	\end{abstract}
	\begin{IEEEkeywords}
		Wireless sensing, rotatable antenna, uniform sparse array
	\end{IEEEkeywords}
	\vspace{-0.5em}
	\section{Introduction}
	
	The forthcoming sixth-generation (6G) networks are anticipated to enable a massive proliferation of location-aware applications such as autonomous driving and robotic navigation. These advanced use cases require high-precision sensing capabilities from the wireless infrastructure. Consequently, wireless sensing is widely envisioned to become a primary service to facilitate the parameter estimation of targets \cite{DOAHAD2022Ruoyu}.

	To achieve superior sensing performance with high angular resolution, large-scale arrays are typically deployed in modern wireless systems \cite{LiQiangAccurate2019}. To reduce the implementation cost while maintaining a comparable spatial resolution, sparse antenna arrays are widely utilized by enlarging the distance between adjacent elements with significantly fewer antennas \cite{LiuSuperNestedI2016}. %,LiuSuperNestedII2016
	Furthermore, movable antennas (MAs) have recently been introduced into wireless sensing to intelligently adjust the physical positions of antennas. This approach can effectively expand the spatial array aperture within a given region to further improve the sensing precision \cite{maMovableAntennaEnhanced2024, ChengzhiIOTJ20262D}.
	
	Nevertheless, the performance gains of large-scale arrays are inevitably accompanied by prohibitive hardware costs and immense computational burdens \cite{ISACRZzhang2024,han2020channel}. Although sparse arrays offer an economical alternative, the most straightforward uniform sparse array (USA) suffers from severe spatial undersampling, which inevitably generates grating lobes, introduces severe signal interference, and causes estimation ambiguities \cite{PatraDOANested2025}. These inherent physical limitations fundamentally restrict the practical sensing performance and strongly necessitate the development of more advanced array architectures. To mitigate these issues, existing works are forced to impose strict restrictions on antenna element placements, such as nested or coprime arrays \cite{ VaidyanathanCoPrime2011}. On the other hand, while MAs can flexibly expand the aperture, their limited moving speed results in degraded real-time sensing capabilities and low observation efficiency \cite{ZhangChannel2024,DOAGPEChengzhi2025}.
	
	Recently, rotatable antennas (RAs), as a simplified realization of six-dimensional MA \cite{10752873}, have demonstrated tremendous potential to address these hardware challenges. An RA typically employs a directional antenna capable of concentrating signal energy towards a specific orientation to achieve highly focused sensing and transmission \cite{zheng2026rotatableantenna}. Furthermore, an RA can rapidly adjust its radiation direction via motor controls or electronic switches \cite{Zheng2025RotatableOpportunities}. This instantaneous switching mechanism significantly improves the temporal efficiency and instantaneous capabilities of environmental perception. Driven by these remarkable advantages, existing studies have actively explored the application of RAs in wireless communications, channel estimations, and integrated sensing and communication systems \cite{ZhengRotatableComm2026, XiongChannel2025}. Despite these remarkable advantages, integrating RAs with USAs to fundamentally overcome the inherent grating lobe issue by leveraging their directional filtering has not been explored.

	To tackle the aforementioned challenges and fully unleash the sensing potential of RAs, we propose a novel tensor decomposition-based wireless sensing method for RA-enhanced USA sensing systems. Specifically, we design a synchronous multiple rotation pattern to capture diverse spatial degrees of freedom such that the received signals across successive rotations can be mathematically formulated as a third-order tensor. Subsequently, we employ the canonical polyadic (CP) decomposition to extract the factor matrices. By investigating the array and the gain factor matrix, we analytically prove that the strict unimodal property of the gain steering vector correlation (SVC) can perfectly compensate for the inherent spatial ambiguity of the array SVC. Driven by this theoretical analysis, we propose to combine the gain and array factor matrices via the Kronecker product, which achieves an unambiguous DOA estimation. Finally, simulation results are provided to demonstrate that the proposed RA-enhanced tensor decomposition-based algorithm achieves robust, high-precision, and unambiguous sensing performance, which significantly outperforms the conventional uniform dense array (UDA) and omnidirectional antenna (OA) systems.
	
	%	Notations: Symbols for vectors (lower case) and matrices (upper case) are in boldface. The transpose and conjugate transpose are represented by $(\cdot)^{\mathrm{T}}$, $(\cdot)^{\mathrm{H}}$, respectively.
	%	The covariance is given by $\mathbb{E}\{\cdot\}$. The $K\times K$-dimensional identity matrix and $K\times L$-dimensional zero matrix are expressed by $\bm{I}_{K\times K}$ and $\bm{0}_{K\times L}$. The symbols $\circledast$ and $\dagger $ denote the Hadamard product and pseudoinverse, respectively. The $T$-dimensional vector with all elements equal to 1 is written as $\bm{1}_{T \times 1}$. The expression $[\bm{\alpha}]_k$ denotes the $k$-th element of $\bm{\alpha}$. The expression $[\bm{B}]_{m,:}$ and $[\bm{B}]_{:,k}$
	%	denote the $m$-th row and the $k$-th column of
	%	$\bm{B}$, respectively. The expression $\rank[\bm{B}]$ denotes the column rank of $\bm{B}$.
%	\vspace{-0.5em}
	\section{System Model}
	\begin{figure}[t]
		\centering
		\includegraphics[width=0.8\linewidth]{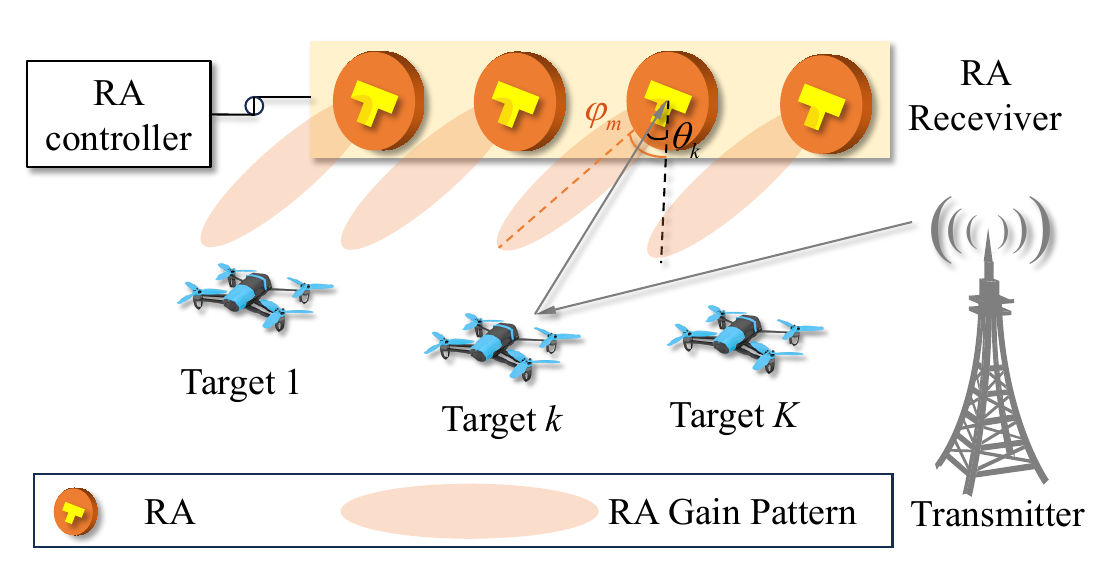}
		\caption{The RA system with USA.}
		\vspace{-0.75em}
		\label{FigRA}
	\end{figure}
	
	We consider a wireless sensing model where the transmitter is equipped with an omnidirectional antenna to transmit periodic waves for target sensing. The receiver is equipped with $N$ RAs to sense $K$ uncorrelated targets located in the far field. The DOA of the $k$-th target is denoted by $\theta_k$. Without loss of generality, we assume that the targets are located within a sensing range bounded by $[-\vartheta_{\max},\vartheta_{\max}]$, where $\vartheta_{\max}$ represents the maximum elevation angle sensed by the RA receiver. As shown in \Cref{FigRA}, the array is equipped with an electronic controller that can simultaneously and synchronously adjust the directivity of the $N$ RAs while keeping their physical positions fixed through electronic technologies such as reconfigurable parasitic radiator loading or diode switches \cite{zheng2026rotatableantenna}. For simplicity, we define the angle between the RA pointing direction and the array normal direction as the ``rotation angle''. We assume that the RAs rotate $M$ times to maximally cover the sensing area of the active probing, and the rotation angle vector can be expressed as $\bm{\varphi}=[\varphi_1, \ldots, \varphi_M]^{\mathrm{T}} \in \mathbb{R}^{M \times 1}$, where $\varphi_m = -\vartheta_{\max}+\frac{2(m-1)\vartheta_{\max}}{M-1}$.
	
	The effective antenna gain of the RA depends on the signal angle of arrival and the rotation angle. In this paper, we adopt the following widely used directional gain pattern model. During the $m$-th rotation, the power gain for the $k$-th target can be expressed as \cite{AntennaTheory}
	\begin{align}
		g_{m,k}(\theta_k, \varphi_m) = \begin{cases} G\cos^{2p}(\theta_k - \varphi_m), \!\!& |\theta_k - \varphi_m| \le \pi/2, \\ 0, & \text{otherwise,} \end{cases}
	\end{align}
	where $p \ge 0$ represents the directivity factor of the RA main lobe beamwidth, $G = 2(2p+1)$ is the maximum gain in the RA pointing direction satisfying the law of power conservation, and $\theta_k$ is the DOA of the $k$-th target. The RA adopts a USA structure in order to enlarge the array aperture, whose array steering vector for the $k$-th target can be expressed as
	\begin{align}
		\bm{a}_k(\theta_k, L)= \big[1, e^{j \frac{2\pi}{\lambda}Ld\sin \theta_k}, \ldots, e^{j \frac{2\pi}{\lambda}(N -1)Ld \sin \theta_k} \big]^{\mathrm{T}},
	\end{align}
	where $d=\frac{\lambda}{2}$ represents the distance between RAs and $L\ge 1$ denotes the sparse factor. %The case of $L=1$ corresponds to a conventional uniform dense array (UDA), and $L>1$ corresponds to a USA. 
	Therefore, the sensing channel of the $k$-th target during the $m$-th rotation can be formulated as
	\begin{align}
		\bm{h}_{m,k}(\theta_k,{\varphi}_m) =\alpha_k b_{m,k}(\theta_k,{\varphi}_m)\bm{a}_k(\theta_k, L) ,
	\end{align}
	%where $\alpha_k=\sqrt{\frac{\varOmega_k \lambda^2}{(4\pi)^3 R_k^4}}$ denotes the scattering coefficient of the $k$-th target, and $\varOmega_k$ and $R_k$ represent the radar cross-section of the $k$-th target and the LOS distance from the target to the receiver respectively.
	where $\alpha_k$ denotes the scattering coefficient of the $k$-th target and $b_{m,k} = \sqrt{g_{m, k}(\theta_k, \varphi_m)}$. Thus, the sensing channel of the $K$ targets during the $m$-th rotation can be expressed as
	\begin{align}
		\bm{H}_m(\bm{\theta},{\varphi}_m) = \bm{A}(\bm{\theta}, L)\bm{\varGamma}_m(\bm{\theta}, {\varphi}_m)\bm{\varLambda} \in \mathbb{C}^{N \times K},
	\end{align}
	where $\bm{\varGamma}_m(\bm{\theta},{\varphi}_m) = \diag(\bm{b}_m(\bm{\theta}, \varphi_m)) \in\mathbb{C}^{K \times K}$, $\bm{b}_m(\bm{\theta}, \varphi_m) = [b_{m, 1}, \ldots,b_{m, K}]^{\mathrm{T}}$, $\bm{A} = [\bm{a}_1(\theta_1, L), \ldots, \bm{a}_K(\theta_K, L)] \in \mathbb{C}^{N\times K}$, and $\bm{\varLambda}=\diag(\alpha_1, \ldots, \alpha_K) \in\mathbb{C}^{K \times K}$. We assume that the transmitter sends the same probing signal during each rotation of the RAs. Consequently, the received signal at the base station during the $t$-th snapshot can be given by
	\begin{align}
		\bm{y}_m(t) = \bm{H}_m(\bm{\theta},{\varphi}_m)\bm{\tilde{s}}(t) + \bm{n}_m(t),
	\end{align}
	where $\bm{\tilde{s}}(t) = [\tilde{s}_1(t), \ldots,\tilde{s}_K(t)]^{\mathrm{T}}$ represents the signal at the $t$-th snapshot and satisfies $\mathbb{E}\{\bm{\tilde{s}}(t)\bm{\tilde{s}}^{\mathrm{H}}(t)\} = \diag(\sigma_1^2, \ldots, \sigma_K^2)$. $\sigma_k^2$ represents the power of the $k$-th signal. $\bm{n}_m(t)\sim\mathcal{C}\mathcal{N}(0, \sigma_n^2\bm{I}_{N})$ denotes the additive white Gaussian noise with $\sigma_n^2$ being the noise power. By collecting $T$ snapshots, the received signal during the $m$-th rotation can be written as
	\begin{align}
		\bm{Y}_m = \bm{A}(\bm{\theta}, L) \bm{\varGamma}_m(\bm{\theta},{\varphi}_m) \bm{S}+\bm{N}_m,
	\end{align}
	where $\bm{Y}_m = [\bm{y}_m(1), \ldots, \bm{y}_m(T)] \in \mathbb{C}^{N \times T}$, $\bm{S} = \bm{\varLambda}\bm{\tilde{S}}$, $\bm{\tilde{S}} = [\bm{\tilde{s}}(1), \ldots, \bm{\tilde{s}}(T)] \in \mathbb{C}^{K \times T}$, and $\bm{N}_m = [\bm{n}_m(1), \ldots, \bm{n}_m(T)] \in \mathbb{C}^{N \times T}$.
	In the next section, we discuss how to model the received signals as a tensor and perform CP decomposition and DOA estimation.
	
	\section{Tensor Decomposition and DOA Estimation}
	
	In this section, we model the received signals as a tensor and employ the alternating least squares (ALS) algorithm to perform CP decomposition for obtaining the estimation of the factor matrices. Subsequently, we analyze the extrema distribution laws of the array SVC and the gain SVC. Based on this, we propose to combine the array and gain factor matrices via the Kronecker product to achieve the unambiguous DOA estimation.
	\vspace{-0.5em}
	\subsection{Tensor Modeling}
	
	We vertically concatenate the received signals from the $M$ rotations to obtain the total received signal matrix $\bm{Y}\in \mathbb{C}^{MN \times T}$, which can be expressed as \eqref{YTensor}, where $\bm{B}(\bm{\theta},\bm{\varphi}) = \big[\bm{b}_1(\theta_1, \bm{\varphi}), \ldots, \bm{b}_K(\theta_K, \bm{\varphi})\big]\in \mathbb{C}^{M \times K}$ and $\bm{b}_k(\theta_k, \bm{\varphi}) = [b_{1, k}, \ldots, b_{M,k}]^{\mathrm{T}}$ is the gain steering vector. The last equality holds based on the definition of the Khatri-Rao product.
	
	\begin{figure*}
		\begin{align}\label{YTensor}
			\!\!\bm{Y} \!= \!
			\begin{bmatrix}
				\bm{Y}_1 \\ \bm{Y}_2 \\ \vdots\\ \bm{Y}_M
			\end{bmatrix}
			\!=\!
			\begin{bmatrix}
				\bm{A}(\bm{\theta}, L) \bm{\varGamma}_1(\bm{\theta},{\varphi}_1) \\ \bm{A}(\bm{\theta}, L) \bm{\varGamma}_2(\bm{\theta},{\varphi}_2) \\ \vdots\\ \bm{A}(\bm{\theta}, L)\bm{\varGamma}_M(\bm{\theta},{\varphi}_M)
			\end{bmatrix}\bm{{S}}
			\!+\!
			\begin{bmatrix}
				\bm{N}_1 \\ \bm{N}_2 \\ \vdots\\ \bm{N}_M
			\end{bmatrix}
			\!=\!
			\begin{bmatrix}
				b_{1,1}\bm{a}_1&b_{1,2}\bm{a}_2&\ldots&b_{1,K}\bm{a}_K \\ 
				b_{2,1}\bm{a}_1&b_{2,2}\bm{a}_2&\ldots&b_{2,K}\bm{a}_K\\
				\vdots&\vdots& \ddots&\vdots\\
				b_{M,1}\bm{a}_1&b_{M,2}\bm{a}_2&\ldots&b_{M,K}\bm{a}_K
			\end{bmatrix}\bm{{S}}+\bm{N} = \big[\bm{B}(\bm{\theta},\bm{\varphi}) \odot\bm{A}(\bm{\theta}, L) \big] \bm{{S}}+\bm{N}.
		\end{align}
		\vspace{-0.5em}
		\hrule
		\vspace{-1em}
	\end{figure*}
	
	According to the theory of tensor CP decomposition \cite{ZhangChannel2024}, $\bm{Y}$ can be viewed as the matrix form of a third-order tensor $\bm{\mathcal{Y}} \in\mathbb{C}^{N\times M\times T}$, which can be specifically represented as the sum of $K$ rank-one tensors
	\begin{align}\label{YTen}
		\bm{\mathcal{Y}} &= \sum_{k=1}^{K}{\bm{a}_k(\theta_k, L) \circ \bm{b}_k(\theta_k, \bm{\varphi}) \circ \bm{s}_k} + \bm{\mathcal{N}}
		\\&= \llbracket \bm{A}(\bm{\theta}, L), \bm{B}(\bm{\theta},\bm{\varphi}),\bm{S}^{\mathrm{T}}\rrbracket + \bm{\mathcal{N}},\nonumber
	\end{align}
	where the $(n, m, t)$-th element of $\bm{\mathcal{Y}}$ corresponds to the element at the $\big((m-1)N+n\big)$-th row and the $t$-th column of the matrix $\bm{Y}$. $\bm{s}_k \in \mathbb{C}^{T \times 1}$ is the $k$-th column of $\bm{S}^{\mathrm{T}}$. $\bm{\mathcal{N}}\in\mathbb{C}^{N\times M\times T}$ represents the noise tensor. We refer $\bm{A}(\bm{\theta}, L)$, $\bm{B}(\bm{\theta},\bm{\varphi})$, and $\bm{S}^{\mathrm{T}}$ to the array, gain, and signal factor matrices, respectively.
	\vspace{-0.5em}
	\subsection{Tensor Decomposition}
	
	Our objective is to estimate the unknown target parameters $\{\theta_k\}_{k=1}^{K}$. Based on \eqref{YTen}, the parameter estimation problem can be modeled as
	\begin{align}
		\min_{\{\theta_k\}_{k=1}^{K}} \enspace \Big\| \bm{\mathcal{Y}} -\sum_{k=1}^{K}{\bm{a}_k(\theta_k, L) \circ \bm{b}_k(\theta_k, \bm{\varphi}) \circ \bm{s}_k} \Big\|_{\mathrm{F}}^2.
	\end{align}
	There are various methods to solve this CP decomposition problem. A commonly used approach is the ALS algorithm. During the iteration process, one factor matrix is updated by alternately fixing the other factor matrices until convergence is reached. The updated factor matrices are given by
	\begin{align}
		\bm{A}_{(\iota+1)} &= \arg \min_{\bm{A}} \Big\|\bm{Y}_{(1)}-\bm{A}\Big( \bm{S}_{(\iota)}^{\mathrm{T}} \odot \bm{B}_{(\iota)}\Big)^{\mathrm{T}} \Big\|_{\mathrm{F}}^2, \\
		\bm{B}_{(\iota+1)} &= \arg \min_{\bm{B}} \Big\|\bm{Y}_{(2)}-\bm{B}\Big( \bm{S}_{(\iota)}^{\mathrm{T}} \odot \bm{A}_{(\iota+1)}\Big)^{\mathrm{T}} \Big\|_{\mathrm{F}}^2, \\
		\bm{S}_{(\iota+1)}^{\mathrm{T}} &= \arg \min_{\bm{S}^{\mathrm{T}}} \Big\|\bm{Y}_{(3)}-\bm{S}^{\mathrm{T}}\Big( \bm{B}_{(\iota+1)} \odot \bm{A}_{(\iota+1)}\Big)^{\mathrm{T}} \Big\|_{\mathrm{F}}^2,
	\end{align}
	where $\bm{Y}_{(1)} \in \mathbb{C}^{N \times MT}$, $\bm{Y}_{(2)}\in \mathbb{C}^{M \times NT}$, and $\bm{Y}_{(3)}\in \mathbb{C}^{T \times MN}$ denote the mode-1, mode-2, and mode-3 unfoldings of $\bm{\mathcal{Y}}$, respectively. The terms $\bm{A}_{(\iota+1)}$, $\bm{B}_{(\iota+1)}$, and $\bm{S}_{(\iota+1)}^{\mathrm{T}}$ represent the factor matrices updated at the $\iota$-th iteration. By alternately and iteratively performing the above three updates, we can obtain the estimates of the array, gain, and signal factor matrices $\{\bm{\hat{A}},  \bm{\hat{B}},\bm{\hat{S}}^{\mathrm{T}}\}$. Note that the factor matrices obtained through CP decomposition typically contain errors and are subject to scaling and permutation ambiguities, which can be given by
	\begin{align}
		\bm{\hat{A}} &= \bm{A}(\bm{\theta}, L)\bm{\varPi}\bm{\varDelta}^{(1)} + \bm{E}^{(1)}, \label{AAhat}\\
		\bm{\hat{B}} &= \bm{B}(\bm{\theta},\bm{\varphi}) \bm{\varPi}\bm{\varDelta}^{(2)} + \bm{E}^{(2)}, \label{BBhat}\\
		\bm{\hat{S}}^{\mathrm{T}} &= \bm{S}^{\mathrm{T}} \bm{\varPi}\bm{\varDelta}^{(3)} + \bm{E}^{(3)},
	\end{align}
	where $\bm{\varPi} \in \mathbb{C}^{K \times K}$ denotes the permutation matrix, while $\bm{\varDelta}^{(i)} \in \mathbb{C}^{K \times K}$ and $\bm{E}^{(i)}$, for $i = 1, 2,3$, represent the diagonal scaling matrices and the tensor decomposition error matrices, respectively. It is worth noting that even under noise-free conditions, the estimated factor matrices $\{\bm{\hat{A}},  \bm{\hat{B}},\bm{\hat{S}}^{\mathrm{T}}\}$ are not identical to the original factor matrices $\{\bm{A}(\bm{\theta}, L), \bm{B}(\bm{\theta},\bm{\varphi}), \bm{S}^{\mathrm{T}}\}$. However, each column of an estimated factor matrix corresponds to the information of the same target, which does not affect the DOA estimation. In the next subsection, we demonstrate how to estimate the DOA based on the sparse array using the estimated factor matrices.
	\vspace{-0.5em}
	\subsection{DOA Estimation}
	
	Through the aforementioned tensor CP decomposition, $\bm{\theta}$ is implicitly coupled in the array and gain factor matrices $\bm{A}(\bm{\theta}, L)$ and $\bm{B}(\bm{\theta},\bm{\varphi})$. According to \eqref{AAhat} and \eqref{BBhat}, we obtain
	\begin{align}
		\bm{\hat{a}}_k &= \delta_k^{(1)}\bm{a}_k(\theta_k, L)+\bm{e}_k^{(1)}\label{aahat},\\
		\bm{\hat{b}}_k &= \delta_k^{(2)}\bm{b}_k(\theta_k, \bm{\varphi})+\bm{e}_k^{(2)}\label{bbhat},
	\end{align}
	where $\delta_k^{(i)}$ is the $k$-th diagonal element of $\bm{\varDelta}^{(i)}$, and $\bm{e}_k^{(i)}$ is the $k$-th column of $\bm{E}^{(i)}$ for $i = 1,2$. Based on the array factor matrix $\bm{A}(\bm{\theta},L)$ and the gain factor matrix $\bm{B}(\bm{\theta}, \bm{\varphi})$, the steering vector correlation between the DOA of the $k$-th target $\theta_k$ and an arbitrary angle $\vartheta$ can be given respectively by
	\begin{align}\label{ASVC}
		\mathcal{A}(\theta_k, \vartheta, L)&\triangleq\frac{|\bm{a}_k^{\mathrm{H}}(\theta_k, L)\bm{a}(\vartheta, L)|^2}{||\bm{a}_k(\theta_k, L)||_2^2 ||\bm{a}(\vartheta, L)||_2^2},\\
		\mathcal{B}(\theta_k, \vartheta, \bm{\varphi})&\triangleq\frac{|\bm{b}_k^{\mathrm{T}}(\theta_k, \bm{\varphi})\bm{b}(\vartheta, \bm{\varphi})|^2}{||\bm{b}_k(\theta_k, \bm{\varphi})||_2^2 ||\bm{b}(\vartheta, \bm{\varphi})||_2^2}\label{BSVC},
	\end{align}
	where $\mathcal{A}(\theta_k, \vartheta, L)$ denotes the array SVC and $\mathcal{B}(\theta_k, \vartheta, \bm{\varphi})$ denotes the gain SVC.
	We further investigate them below. 
	
	Specifically, $\mathcal{A}(\theta_k, \vartheta, L)$ can be simplified as \eqref{SVCA}, shown on top of the next page
	\begin{figure*}
		\begin{align}
			\mathcal{A}(\theta_k, \vartheta, L) &= \frac{1}{N^2} \Big| \sum\nolimits_{n=0}^{N-1} \big( e^{-j \pi L n \sin\theta_k} \big) \big( e^{j \pi L n \sin\vartheta} \big) \Big|^2 = \frac{1}{N^2} \Big| \sum\nolimits_{n=0}^{N-1} e^{j n \pi L (\sin\vartheta - \sin\theta_k)} \Big|^2 = \frac{1}{N^2} \Big| \frac{1 - e^{j N \pi L (\sin\vartheta - \sin\theta_k)}}{1 - e^{j \pi L (\sin\vartheta - \sin\theta_k)}} \Big|^2 \nonumber \\
			&= \frac{1}{N^2} \Bigg| \frac{e^{j \frac{N \pi L}{2} (\sin\vartheta - \sin\theta_k)} \big[ e^{-j \frac{N \pi L}{2} (\sin\vartheta - \sin\theta_k)} - e^{j \frac{N \pi L}{2} (\sin\vartheta - \sin\theta_k)} \big]}{e^{j \frac{\pi L}{2} (\sin\vartheta - \sin\theta_k)} \big[ e^{-j \frac{\pi L}{2} (\sin\vartheta - \sin\theta_k)} - e^{j \frac{\pi L}{2} (\sin\vartheta - \sin\theta_k)} \big]} \Bigg|^2 = \frac{1}{N^2} \frac{\sin^2\big[ \frac{N \pi L}{2} (\sin\vartheta - \sin\theta_k) \big]}{\sin^2\big[ \frac{\pi L}{2} (\sin\vartheta - \sin\theta_k) \big]}  \label{SVCA}.
		\end{align}
		\vspace{-0.5em}
		\hrule
		\vspace{-0.75em}
	\end{figure*}
	where the third equality utilizes the sum formula of a geometric progression, and the forth equality applies Euler's formula. As a variant of the Dirichlet kernel, $\mathcal{A}(\theta_k, \vartheta, L)$ achieves the maximum value of 1 if and only if the following condition is satisfied
	\begin{align}
		\sin\vartheta = \sin\theta_k + {2z}/{L},\enspace z\in \mathbb{Z}.
	\end{align}
	Since $\theta_k,\vartheta \in [-\vartheta_{\max}, \vartheta_{\max}]$ and $|\vartheta|<\frac{\pi}{2}$, the equation has a unique solution $\vartheta =\theta_k$ if and only if $L =1$. When $L>1$, there are multiple values of $\vartheta$ that maximize $\mathcal{A}(\theta_k, \vartheta, L)$. Therefore, $\theta_k$ cannot be estimated unambiguously relying solely on the array factor matrix $\bm{A}(\bm{\theta}, L)$, which reveals the fundamental limitation of conventional sparse arrays.
	
	Next, we proceed to discuss the properties of $\mathcal{B}(\theta_k, \vartheta, \bm{\varphi})$.
	\begin{theorem}\label{Theorem1}
		When the number of rotations satisfies $M\ge 2$, $\mathcal{B}(\theta_k, \vartheta, \bm{\varphi})$ possesses a unique global maximum within the sensing range if and only if $\vartheta = \theta_k$.
	\end{theorem}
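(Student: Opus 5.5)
The approach is Cauchy--Schwarz, followed by a characterization of the equality case. Since the entries $b_{m,k}=\sqrt{g_{m,k}}$ are real and nonnegative, \eqref{BSVC} gives $\mathcal{B}(\theta_k,\vartheta,\bm{\varphi})\le 1$ for every $\vartheta$ in the sensing range. Equality holds if and only if $\bm{b}(\vartheta,\bm{\varphi})=c\,\bm{b}_k(\theta_k,\bm{\varphi})$ for some scalar $c$, and nonnegativity forces $c>0$. The value $1$ is attained at $\vartheta=\theta_k$. The theorem therefore reduces to one claim: for $M\ge 2$ (and a genuinely directional pattern, $p>0$), proportionality of the two gain steering vectors forces $\vartheta=\theta_k$. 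I would state explicitly that $p>0$ is needed. For $p=0$ the gain is constant, $\bm{b}$ is independent of the angle, and $\mathcal{B}\equiv 1$.

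For the proportionality step, write the ratio entrywise. For every $m$ at which the entries are nonzero, proportionality means
\begin{align}
\Big(\frac{\cos(\vartheta-\varphi_m)}{\cos(\theta_k-\varphi_m)}\Big)^{p} = c .
\end{align}
Taking $p$-th roots and expanding the cosines gives
\begin{align}
\frac{\cos\vartheta+\sin\vartheta\tan\varphi_m}{\cos\theta_k+\sin\theta_k\tan\varphi_m}=c^{1/p} .
\end{align}
Set $t_m=\tan\varphi_m$. The left side is a Möbius function of $t_m$ whose determinant is
\begin{align}
\cos\vartheta\sin\theta_k-\sin\vartheta\cos\theta_k=\sin(\theta_k-\vartheta).
\end{align}
A nondegenerate Möbius map is injective. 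So if the ratio takes the same value at two distinct $t_m$, the determinant must vanish, giving $\sin(\theta_k-\vartheta)=0$. Because $|\theta_k-\vartheta|\le 2\vartheta_{\max}<\pi$, this means $\vartheta=\theta_k$. The rotation angles $\varphi_m$ are distinct and lie in $(-\pi/2,\pi/2)$, so distinct $m$ give distinct $t_m$.

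The step I expect to be the main obstacle is guaranteeing two indices $m\neq m'$ at which both $b_{m,k}(\theta_k)$ and $b_m(\vartheta)$ are nonzero. The truncation $|\theta-\varphi_m|\le\pi/2$ can create zeros.
\begin{itemize}
\item First, proportionality with $c>0$ forces $\bm{b}(\vartheta)$ and $\bm{b}_k(\theta_k)$ to have the same support. Zero entries of one must match zero entries of the other.
\item If $\vartheta_{\max}<\pi/4$, every $|\theta-\varphi_m|\le 2\vartheta_{\max}<\pi/2$, so all $M\ge 2$ entries are strictly positive and the argument above applies directly.
\item In general, both endpoints $\varphi_1=-\vartheta_{\max}$ and $\varphi_M=\vartheta_{\max}$ are within $\pi/2$ of any angle in $[-\vartheta_{\max},\vartheta_{\max}]$ whenever $\vartheta_{\max}\le\pi/4$. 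So $m=1$ and $m=M$ always supply the two needed positive entries. I would add this standing assumption, or check that boundary equalities $|\theta-\varphi_m|=\pi/2$ only produce zeros without breaking the argument.
\end{itemize}
Combining the Cauchy--Schwarz bound with this equality analysis shows that $\vartheta=\theta_k$ is the unique maximizer. This is exactly the statement of \Cref{Theorem1}.
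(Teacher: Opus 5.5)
Your proof is correct under the hypotheses you state, and its skeleton matches the paper's. Both use Cauchy--Schwarz, write collinearity as equal entrywise ratios $\cos^p(\vartheta-\varphi_m)/\cos^p(\theta_k-\varphi_m)$, take $p$-th roots, and compare two entries. The only difference is the variable in which injectivity is proved. The paper fixes $\varphi_1,\varphi_m$ and shows that $f(\vartheta)=\cos(\vartheta-\varphi_1)/\cos(\vartheta-\varphi_m)$ is strictly decreasing, since $f'(\vartheta)=\sin(\varphi_1-\varphi_m)/\cos^2(\vartheta-\varphi_m)<0$. You instead fix $\theta_k,\vartheta$ and show injectivity in $\tan\varphi_m$ through the M\"{o}bius determinant $\sin(\theta_k-\vartheta)$. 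Both are readings of the single identity
\[
\cos(\vartheta-\varphi_1)\cos(\theta_k-\varphi_m)-\cos(\vartheta-\varphi_m)\cos(\theta_k-\varphi_1)=-\sin(\vartheta-\theta_k)\sin(\varphi_m-\varphi_1),
\]
so neither route is stronger. Yours is calculus-free and lands directly on $\sin(\theta_k-\vartheta)=0$.

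Where you genuinely improve on the paper is in the hypotheses. The paper takes $p$-th roots without requiring $p>0$. It also simply lists $|\theta_k-\varphi_m|<\pi/2$ and $|\vartheta-\varphi_m|<\pi/2$ for all $m$ among the collinearity conditions, i.e.\ it silently assumes every entry is positive.

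Your repair, $\vartheta_{\max}\le\pi/4$, is however stronger than needed, and it excludes the paper's own simulation setting ($\vartheta_{\max}=\pi/3$, $M=7$). Your M\"{o}bius step only needs two indices in the common support. Choose $j$ with $\varphi_j\le\theta_k\le\varphi_{j+1}$. Then $|\theta_k-\varphi_j|$ and $|\theta_k-\varphi_{j+1}|$ are both at most $2\vartheta_{\max}/(M-1)\le\vartheta_{\max}<\pi/2$ whenever $M\ge 3$. Hence, with $p>0$, the theorem holds for every $\vartheta_{\max}<\pi/2$ once $M\ge 3$.

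For $M=2$ your restriction is genuinely necessary. If $\vartheta_{\max}>\pi/4$, every $\theta\in[\pi/2-\vartheta_{\max},\vartheta_{\max}]$ has $\bm{b}(\theta,\bm{\varphi})\propto[0,1]^{\mathrm{T}}$. So $\mathcal{B}=1$ on a whole interval, and the statement as written is false there.

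Finally, your third bullet is slightly off at $\vartheta_{\max}=\pi/4$ exactly. For $\theta=\pm\pi/4$ one endpoint entry equals $\cos^p(\pi/2)=0$, so $m=1$ and $m=M$ do not both supply positive entries. Uniqueness still holds there, because that one-entry support occurs only at a single angle.
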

	\begin{proof}
		According to the Cauchy-Schwarz inequality, the inner product of two vectors is maximized when they are collinear. When $\bm{b}(\theta_k, \bm{\varphi})$ and $\bm{b}(\vartheta, \bm{\varphi})$ are collinear, we have
		\begin{subequations}
			\begin{align}
				\!\!\!\frac{ \cos^p(\vartheta - \varphi_1)}{ \cos^p(\theta_k - \varphi_1)} \!=\! \frac{ \cos^p(\vartheta - \varphi_2)}{ \cos^p(\theta_k - \varphi_2)}& \!=\! \dots \!=\!\frac{ \cos^p(\vartheta - \varphi_M)}{ \cos^p(\theta_k - \varphi_M)}\label{PD},\\
				\!\!\!\!\!\!|\theta_k -\varphi_m|<{\pi}/{2},&\enspace|\vartheta -\varphi_m|<{\pi}/{2}.
			\end{align}
		\end{subequations}
		Since the cosine function is strictly monotonically decreasing in the interval $(0, {\pi}/{2})$ and it is an even function, we can take the $p$-th root on both sides of the equalities in \eqref{PD} while maintaining their validity. Without loss of generality, we select the first and the $m$-th terms in \eqref{PD} to obtain
		\begin{align}
			\frac{\cos(\vartheta - \varphi_1)}{\cos(\vartheta - \varphi_m)} = \frac{\cos(\theta_k - \varphi_1)}{\cos(\theta_k - \varphi_m)},
		\end{align}
		where the equality applies the alternendo property of proportions. We then prove that the equality holds if and only if $\vartheta = \theta_k$, which is equivalent to proving that the ratio function $f(\vartheta) = \frac{\cos(\vartheta - \varphi_1)}{\cos(\vartheta - \varphi_m)}$ is injective. Taking the derivative of the ratio function yields
		\begin{align}
			f'(\vartheta) &= \frac{ -\sin(\vartheta - \varphi_1)\cos(\vartheta - \varphi_m) + \cos(\vartheta - \varphi_1)\sin(\vartheta - \varphi_m) }{ \cos^2(\vartheta - \varphi_m) }\nonumber\\
			& = \frac{ \sin\big[(\vartheta - \varphi_m) - (\vartheta - \varphi_1)\big] }{ \cos^2(\vartheta - \varphi_m) } = \frac{ \sin(\varphi_1 - \varphi_m) }{ \cos^2(\vartheta - \varphi_m) }.
		\end{align}
		Since $\varphi_m- \varphi_1 \le 2\vartheta_{\max}<\pi$, we have $f'(\vartheta)<0$, indicating that $f(\vartheta)$ is injective. Hence, $\mathcal{B}(\theta_k, \vartheta, \bm{\varphi})$ achieves the unique global maximum if and only if $\vartheta = \theta_k$. This completes the proof of Theorem \ref{Theorem1}.
	\end{proof}\vspace{-0.5em}
	Theorem \ref{Theorem1} demonstrates that when $M\ge2$, $\mathcal{B}(\theta_k, \vartheta, \bm{\varphi})$ achieves its unique maximum only at $\vartheta = \theta_k$. This implies that we can employ a correlation based algorithm to unambiguously estimate $\theta_k$ from the $k$-th column of the gain factor matrix $\bm{B}(\theta_k, \bm{\varphi})$. Based on this property, we can derive the following corollary.
	\begin{corollary}\label{Corollary1}
		$\mathcal{C}(\theta_k, \vartheta, L, \bm{\varphi}) \triangleq\mathcal{A}(\theta_k, \vartheta, L) \mathcal{B}(\theta_k, \vartheta, \bm{\varphi})$ has a unique maximum if and only if $\theta_k = \vartheta$.
	\end{corollary}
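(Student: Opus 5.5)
The plan is to exploit that $\mathcal{C}$ is a product of two functions that both lie in $[0,1]$, and to show that the value $1$ is reached only at $\vartheta=\theta_k$.

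\emph{Step 1: bounds.} By the Cauchy--Schwarz inequality, $0\le\mathcal{A}(\theta_k,\vartheta,L)\le 1$ and $0\le\mathcal{B}(\theta_k,\vartheta,\bm{\varphi})\le 1$ for every $\vartheta$ in the sensing range. Here we assume $\bm{b}(\vartheta,\bm{\varphi})\neq\bm{0}$, which holds because $|\vartheta-\varphi_m|<\pi/2$ for some $m$. Hence $0\le\mathcal{C}\le 1$.

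\emph{Step 2: the maximum is attained at $\theta_k$.} At $\vartheta=\theta_k$, both normalized correlations equal $1$, since each vector is collinear with itself. So $\mathcal{C}(\theta_k,\theta_k,L,\bm{\varphi})=1$, and this is the global maximum.

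\emph{Step 3: uniqueness.} Suppose $\mathcal{C}(\theta_k,\vartheta,L,\bm{\varphi})=1$. Each factor lies in $[0,1]$, so both factors must equal $1$; in particular $\mathcal{B}(\theta_k,\vartheta,\bm{\varphi})=1$. The equality case of Cauchy--Schwarz then forces $\bm{b}(\vartheta,\bm{\varphi})$ to be collinear with $\bm{b}_k(\theta_k,\bm{\varphi})$. By Theorem~\ref{Theorem1}, with $M\ge2$, this happens only when $\vartheta=\theta_k$.

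We never need to locate the grating-lobe maxima of $\mathcal{A}$ described after \eqref{SVCA}. Those points $\sin\vartheta=\sin\theta_k+2z/L$ satisfy $\mathcal{A}=1$ but have $\mathcal{B}<1$, so their product is strictly below $1$.

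\emph{Expected obstacle.} The only delicate point is the equality case in Theorem~\ref{Theorem1}. Its proof assumes $|\vartheta-\varphi_m|<\pi/2$ for the rotations used in the ratio comparison. If $\vartheta$ lies outside the main lobe of some rotation, the zero entries could in principle spoil the injectivity argument. I would handle this by a support argument. Collinearity of two nonnegative vectors requires identical zero patterns. Since $2\vartheta_{\max}<\pi$, every $\vartheta$ in the sensing range has $|\vartheta-\varphi_m|<\pi/2$ for all $m$, so all entries are positive. The strictly monotone ratio argument of Theorem~\ref{Theorem1} then applies to the pair $(\varphi_1,\varphi_M)$, which completes the uniqueness step.
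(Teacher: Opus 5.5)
Your Steps 1--3 coincide with the paper's proof. Cauchy--Schwarz puts $\mathcal{A}$ and $\mathcal{B}$ in $[0,1]$, so $\mathcal{C}=1$ forces $\mathcal{B}=1$, and Theorem~\ref{Theorem1} then gives $\vartheta=\theta_k$. You are right that zero entries are a real weak point: the paper's proof of Theorem~\ref{Theorem1} just lists $|\vartheta-\varphi_m|<\pi/2$ for every $m$ among the collinearity conditions, without justification.

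\textbf{The full-support claim is false.} Your repair asserts that $2\vartheta_{\max}<\pi$ gives $|\vartheta-\varphi_m|<\pi/2$ for all $m$. But from $\vartheta,\varphi_m\in[-\vartheta_{\max},\vartheta_{\max}]$ you only get $|\vartheta-\varphi_m|\le 2\vartheta_{\max}<\pi$. Full support for every $\vartheta$ would need $\vartheta_{\max}<\pi/4$. It fails in the paper's own setting $\vartheta_{\max}=\pi/3$: for $\vartheta=-\pi/3$ and $\varphi_M=\pi/3$, the $M$-th entry of $\bm{b}(\vartheta,\bm{\varphi})$ vanishes. So ``all entries are positive'' is wrong, and so is the appeal to the pair $(\varphi_1,\varphi_M)$.

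\textbf{How to repair it.} Your observation that collinear nonnegative vectors share their support $S$ is the right tool. Use $S$ itself rather than all of $\{1,\ldots,M\}$.
\begin{enumerate}
\item Pick $m_1<m_2$ in $S$. Both $\vartheta$ and $\theta_k$ then lie in the interval $(\varphi_{m_2}-\pi/2,\varphi_{m_1}+\pi/2)$, where both cosines are positive.
\item On that interval you can take $p$-th roots. This needs $p>0$; for $p=0$ the vector $\bm{b}$ is constant on its support and Theorem~\ref{Theorem1} fails outright.
\item The ratio $f(x)=\cos(x-\varphi_{m_1})/\cos(x-\varphi_{m_2})$ has $f'(x)=\sin(\varphi_{m_1}-\varphi_{m_2})/\cos^2(x-\varphi_{m_2})<0$ there, which forces $\vartheta=\theta_k$.
\end{enumerate}

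\textbf{When the repair works, and when the statement fails.} The argument needs $|S|\ge 2$. This holds for every $\vartheta$ in range whenever the rotation step $2\vartheta_{\max}/(M-1)$ is below $\pi/2$, so in particular for all $M\ge 3$. For $M=2$ and $\vartheta_{\max}>\pi/4$, the statement itself is false:
\begin{itemize}
\item Every $\vartheta\in[\pi/2-\vartheta_{\max},\vartheta_{\max}]$ has $\bm{b}(\vartheta,\bm{\varphi})\propto[0,1]^{\mathrm{T}}$, so $\mathcal{B}\equiv 1$ on that interval.
\item Take $\vartheta_{\max}=\pi/3$, $L=8$, $\theta_k=50^\circ$ and $\vartheta=\arcsin(\sin 50^\circ-1/4)\approx 31.1^\circ$. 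Then $\mathcal{A}=\mathcal{B}=\mathcal{C}=1$ at a point $\vartheta\neq\theta_k$.
\end{itemize}
Your proof goes through once you add the hypotheses $p>0$ and either $M\ge 3$ or $M=2$ with $\vartheta_{\max}\le\pi/4$, and replace the full-support claim with the two-index argument above.
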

	\begin{proof}
		Specifically, $\mathcal{C}(\theta_k, \vartheta, L, \bm{\varphi})$ can be expressed as
		\begin{align}
			\!\!\!\!\!\!\!\!&\mathcal{C}(\theta_k, \! \vartheta, \! L,\!  \bm{\varphi}) \!=\!\frac{|\bm{a}_k^{\mathrm{H}}(\theta_k,\! L)\bm{a}(\vartheta, \!L)|^2}{\! ||\bm{a}_k(\theta_k,\! L)||_2^2  ||\bm{a}(\vartheta, \!L)||_2^2}\!\!  \frac{|\bm{b}_k^{\mathrm{T}}(\theta_k,\! \bm{\varphi})\bm{b}(\vartheta,\! \bm{\varphi})|^2}{||\bm{b}_k(\theta_k,\! \bm{\varphi})||_2^2 ||\bm{b}(\vartheta,\! \bm{\varphi})||_2^2}\nonumber\\
			&\!=\!\frac{|\big[\bm{a}_k(\theta_k, L) \otimes\bm{b}_k(\theta_k, \bm{\varphi})\big]^{\mathrm{H}} \big[\bm{a}(\vartheta, L) \otimes\bm{b}(\vartheta, \bm{\varphi})\big]|^2}{||\bm{a}_k(\theta_k, L) \otimes\bm{b}_k(\theta_k, \bm{\varphi})||_2^2 ||\bm{a}(\vartheta,L) \otimes\bm{b}(\vartheta, \bm{\varphi})||_2^2},
		\end{align}
		where the second equality applies the property of the Kronecker product $(\bm{a}\otimes\bm{b})^{\mathrm{H}}(\bm{c}\otimes\bm{d}) = (\bm{a}^{\mathrm{H}} \bm{c})(\bm{b}^{\mathrm{H}}\bm{d})$.
		According to the Cauchy-Schwarz inequality, the array SVC satisfies $\mathcal{A}(\theta_k, \vartheta, L) \in [0, 1]$ and the gain SVC satisfies $\mathcal{B}(\theta_k, \vartheta, \bm{\varphi}) \in [0, 1]$. Therefore, the joint SVC must satisfy $\mathcal{C}(\theta_k, \vartheta, L, \bm{\varphi}) \le 1$.
		To achieve the global maximum $\mathcal{C}(\theta_k, \vartheta, L, \bm{\varphi})= 1$ for the joint SVC, the following necessary conditions must be satisfied simultaneously, which requires $\mathcal{A}(\theta_k, \vartheta, L) = 1$ and $\mathcal{B}(\theta_k, \vartheta, \bm{\varphi}) = 1$.
		Due to the spatial undersampling nature of uniform sparse arrays, the solution set of $\mathcal{A}(\theta_k, \vartheta, L) = 1$ contains not only the true target angle $\theta_k$ but also multiple physically ambiguous grating lobe angles. However, according to Theorem \ref{Theorem1}, the necessary and sufficient condition for $\mathcal{B}(\theta_k, \vartheta, \bm{\varphi}) = 1$ to hold is exactly $\vartheta = \theta_k$.
		Thus, the solution to $\mathcal{C}(\theta_k, \vartheta, L, \bm{\varphi})=1$ is the intersection of the solutions for $\mathcal{A}(\theta_k, \vartheta, L) = 1$ and $\mathcal{B}(\theta_k, \vartheta, \bm{\varphi}) = 1$, yielding the unique solution $\vartheta = \theta_k$, which guarantees the uniqueness of multiple target estimation.
	\end{proof}\vspace{-0.5em}
	Corollary \ref{Corollary1} proves that although the array SVC inevitably suffers from grating lobe ambiguities caused by spatial undersampling, the joint SVC successfully suppresses all spurious peaks of non true targets because the gain SVC possesses a strict unimodal property that achieves its maximum if and only if $\vartheta = \theta_k$, thereby theoretically guaranteeing unambiguous DOA estimation. Furthermore, the joint SVC can fully exploit all information regarding the target DOA contained in the factor matrices $\bm{A}$ and $\bm{B}$, ensuring the algorithmic performance.
	
	We compute the Kronecker product of \eqref{aahat} and \eqref{bbhat} as
	\begin{align}
		\bm{\hat{c}}_k &= \bm{\hat{a}}_k \otimes \bm{\hat{b}}_k \in \mathbb{C}^{MN \times 1} \label{cchat}  \\
		&= \big[\delta_k^{(1)}\bm{a}_k(\theta_k, L)+\bm{e}_k^{(1)}\big] \otimes  \big[\delta_k^{(2)}\bm{b}_k(\theta_k, \bm{\varphi})+\bm{e}_k^{(2)}\big]\nonumber\\
		& = \delta_k^{(1,2)}\bm{a}_k(\theta_k, L) \otimes \bm{b}_k(\theta_k, \bm{\varphi}) +\bm{e}_k^{(1,2)}\nonumber,
	\end{align}
	where $\delta_k^{(1,2)} = \delta_k^{(1)}\delta_k^{(2)}$ and $\bm{c}_k (\theta_k,L, \bm{\varphi})= \bm{a}_k(\theta_k, L) \otimes \bm{b}_k(\theta_k, \bm{\varphi})$, and the specific expression of $\bm{e}_{k}^{(1,2)}\in \mathbb{C}^{MN \times 1}$ is given by
	\begin{align}
		\!\!\!\!\bm{e}_{k}^{(1,\!2)}\! \!=\! \delta_k^{(1)}\! \bm{a}_k(\theta_k,\! L) \!\otimes\! \bm{e}_k^{(2)}\! \!+\! \delta_k^{(2)}\! \bm{e}_k^{(1)}\!\otimes\! \bm{b}_k(\theta_k,\! \bm{\varphi}) \!+\! \bm{e}_k^{(1)} \!\otimes\!\bm{e}_k^{(2)}\!\!.
	\end{align}
	According to Corollary \ref{Corollary1}, we apply a correlation-based search algorithm to \eqref{cchat} to unambiguously estimate the DOA of the $k$-th target $\hat{\theta}_k$ as
	\begin{align}\label{ccSVCg}
		\hat{\theta}_k &= \arg\max_{\vartheta} \frac{|\bm{\hat{c}}_k^{\mathrm{H}} \bm{c}(\vartheta, L, \bm{\varphi})|^2}{||\bm{\hat{c}}_k||_2^2 ||\bm{c}( \vartheta, L, \bm{\varphi})||_2^2 }\\
		& = \arg\max_{\vartheta} \frac{|\big(\bm{\hat{a}}_k \otimes\bm{\hat{b}}_k\big)^{\mathrm{H}}
			\big(\bm{a}(\vartheta, L) \otimes\bm{b}(\vartheta, \bm{\varphi})\big)|^2}{||\bm{\hat{a}}_k \otimes  \bm{\hat{b}}_k||_2^2|| \bm{a}(\vartheta, L) \otimes \bm{b}(\vartheta, \bm{\varphi}) ||_2^2 }.\nonumber
	\end{align}
	To this end, we have completed the DOA estimation via the proposed RA-enhanced tensor decomposition-based algorithm.
	
	%	\begin{algorithm}[t]
		%		\renewcommand{\algorithmicrequire}{\textbf{Input }}  %Use Input in the format of Algorithm
		%		\renewcommand{\algorithmicensure}{\textbf{Output }}  %UseOutput in the format of Algorithm
		%		\caption{Proposed AO-Based Self-Calibration Algorithm}
		%		\label{Algorithm 1}
		%		\begin{algorithmic}[1]
			%			\Require 
			%			The received signal $\bm{Z}$, the number of snapshots $T$, the number of signal sources $K$, the movable region $H$, the number of MAs $M$, and the number of calibrated MAs $M_c$.
			%			\State Calculate the noise subspcace $\bm{E}_N$ via \eqref{RY} and \eqref{EVD};
			%			\State Construct the constraint condition via \eqref{condition};
			%			\While {$l\leq L$ or $\|\bm{\hat{\theta}}^{(l)}-\bm{\hat{\theta}}^{(l-1)}\|_{\mathrm{F}}^2\geq \delta$}
			%			
			%			\Statex {\textbf{Stage 1: DOA Estimation}}
			%			\State Calculate the estimated DOA $\bm{\theta}^{(l)}$ via \eqref{Initial}.
			%			\Statex {\textbf{Stage 2: APE Estimation}}
			%			\For {each $k\in [1,\ldots,K]$ }
			%			
			%			\State Estimate $\bm{\hat{{a}}}(\theta_k,{\Delta\bm{P}})$ via \eqref{ahat};
			%			\State Compute the function via  \eqref{only};
			%			\EndFor
			%			\State Estimate the APE of MA $[{\Delta \bm{\hat{P}}}^{(l)}]_{m,:}$ via \eqref{all};
			%			\State Update the estimated actual positions of MA $\bm{\hat{{P}}}^{(l-1)}$ via $\bm{\hat{ P}}^{(l)}={\Delta \bm{\hat{P}}}^{(l)}+\bm{\tilde{P}}$;
			%			\State $l\leftarrow l+1$;
			%			\EndWhile
			%			\Ensure Return estimated parameters $\bm{\hat{\theta}}$ and $\bm{\hat{ P}}$.
			%		\end{algorithmic}
		%		%\vspace{-1em}
		%	\end{algorithm}

	%\vspace{-0.5em}
	\section{Simulation Results}
	In this section, we evaluate the performance of the proposed algorithm through numerical simulations. The signal-to-noise ratio (SNR) is given by $\mathrm{SNR} = 10\lg \frac{\sigma_k^2}{\sigma_n^2}$. %and root mean square error (RMSE) is calculated by $\mathrm{RMSE} = \sqrt{\frac{1}{QK}\sum_{q=1}^{Q}{\sum_{k=1}^{K}{\big(\theta_k - \hat{\theta}_{k,q}\big)^2}}}$ with $Q$ and $\hat{\theta}_{k,q}$ denoting total number of the Monte Carlo simulation and the estimated DOA of the $k$-th target during the $q$-th simulation. 
	Without loss of generality, we normalize the scattering coefficients as $\alpha_k=1$ and set the signal power as $\sigma_k^2 =1$ for $k =1,\ldots,K$. Unless otherwise specified, the default parameters are set as follows. The sensing range is evaluated within $[-\pi/3, \pi/3]$ with three target DOAs located at $[-20^\circ, 15^\circ, 45^\circ]$. The receiver is equipped with $N = 8$ RAs operating under a sparse factor of $L = 2$ and a directivity factor of $p = 3$. During the active probing phase, the RAs perform $M = 7$ synchronous rotations. Furthermore, we collect $T = 20$ snapshots for each observation under an SNR of 10 $\mathrm{dB}$.  For performance comparison, we consider the following baseline schemes:
	\begin{itemize}
		\item \textbf{Proposed US-RA} which denotes the proposed algorithm equipped with RAs under the USA configuration.
		\item \textbf{Proposed UD-RA}, which represents the proposed algorithm equipped with RAs under the UDA configuration.
		\item \textbf{Conventional US-OA}, which refers to the conventional MUSIC algorithm equipped with OAs under the USA configuration\footnote{It is worth noting that the received signals in conventional schemes cannot be formulated as a tensor due to the absence of antenna rotation. To guarantee a fair comparison with an identical volume of received data, we aggregate the snapshots corresponding to the multiple rotation periods of RAs for the conventional schemes.}.
		\item \textbf{Conventional UD-OA}, which refers to the conventional MUSIC algorithm equipped with OAs under the UDA configuration.
	\end{itemize}
	
	\begin{figure}[t]
		\centering
		\begin{minipage}{0.43\linewidth}
			\centering
			\includegraphics[width=\linewidth]{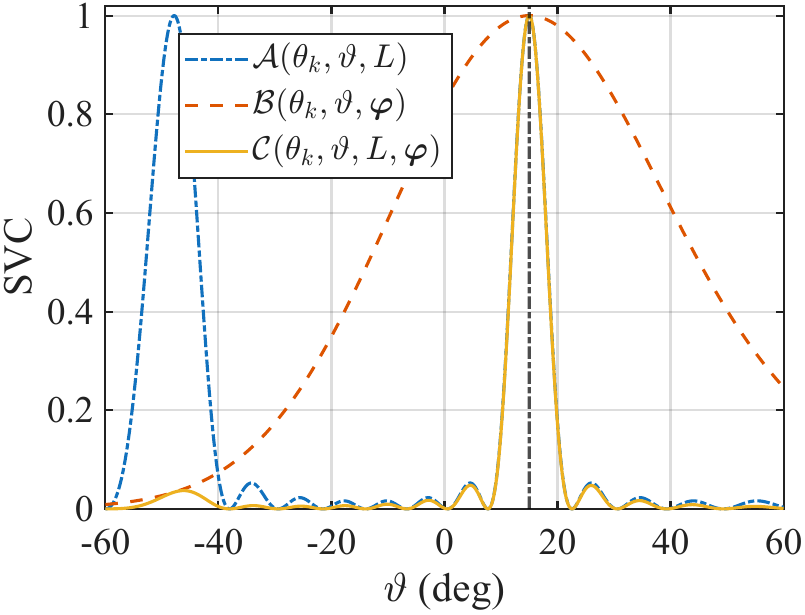}
			\caption{SVC versus $\vartheta$.}
			\label{SVC_ABC_Proposed_RA}
		\end{minipage}
		\hfill
		\begin{minipage}{0.43\linewidth}
			\centering
			\includegraphics[width=\linewidth]{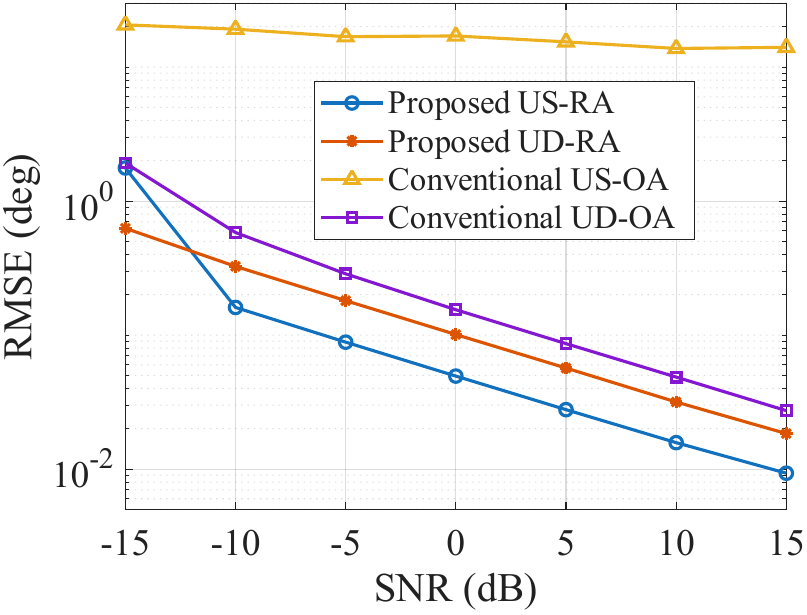}
			\caption{RMSE versus SNR.}
			\label{RA_RMSE_SNR}
			
		\end{minipage}
		\vspace{-0.75em}
	\end{figure}
	\begin{figure}[t]
		\centering
		\begin{minipage}{0.43\linewidth}
			\centering
			\includegraphics[width=\linewidth]{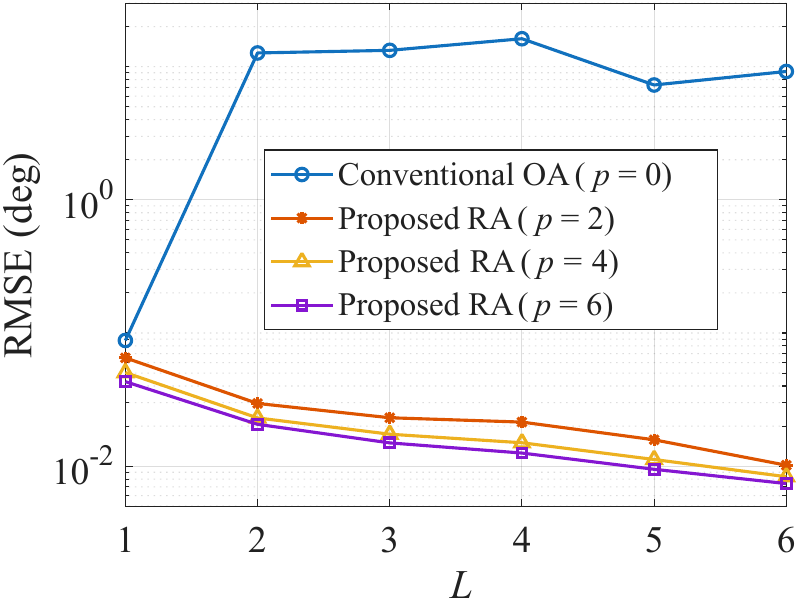}
			\caption{RMSE versus the sparse factor $L$.}
			\label{RA_RMSE_d_lambda}
		\end{minipage}
		\hfill
		\begin{minipage}{0.43\linewidth}
			\centering
			\includegraphics[width=\linewidth]{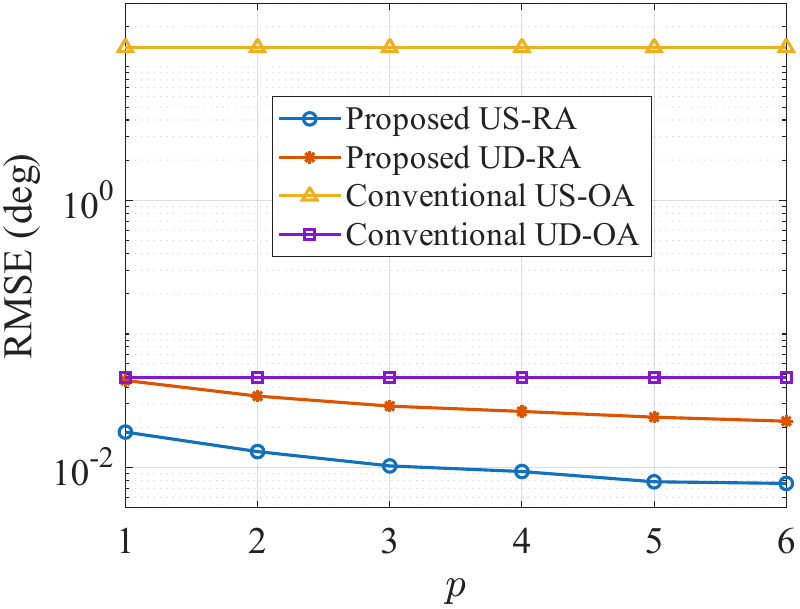}
			\caption{RMSE versus the directivity factor $p$.}
			\label{RA_RMSE_P}
		\end{minipage}
		\vspace{-0.75em}
	\end{figure}
	
	\Cref{SVC_ABC_Proposed_RA} depicts the variations of the array SVC $\mathcal{A}(\theta_k, \vartheta, L)$, the gain SVC $\mathcal{B}(\theta_k, \vartheta, \bm{\varphi})$, and the joint SVC $\mathcal{C}(\theta_k, \vartheta, \bm{\varphi}, L)$ with respect to the sensing angle. In this simulation, the true DOA of the $k$-th target $\theta_k = 15^\circ$ and the directivity factor is set to $p = 5$. From the simulation results, we can observe that $\mathcal{A}(\theta_k, \vartheta, L)$ exhibits severe grating lobes due to the USA configuration of RAs, which makes unambiguous DOA estimation impossible. However, $\mathcal{B}(\theta_k, \vartheta, \bm{\varphi})$ possesses a unique main lobe within the entire sensing range and exhibits extremely small values at the spatial directions corresponding to the grating lobes of $\mathcal{A}(\theta_k, \vartheta, L)$. Consequently, the grating lobes in $\mathcal{C}(\theta_k, \vartheta, \bm{\varphi}, L)$ are significantly suppressed into negligible sidelobes to enable unambiguous DOA estimation. This observation is very consistent with Theorem \ref{Theorem1} and Corollary \ref{Corollary1}.
	
	\Cref{RA_RMSE_SNR} compares the RMSE performance of different baseline schemes against the SNR. We can observe that the conventional US-OA scheme fails to estimate the DOAs due to the severe grating lobe issue. Furthermore, both the proposed UD-RA and US-RA schemes significantly outperform the conventional UD-OA scheme. On one hand, the employed RAs can concentrate the signal energy within a specific angular range to provide a larger received power gain compared to the OAs. On the other hand, the USA configuration effectively expands the spatial array aperture to further enhance the overall sensing performance.
	
	\Cref{RA_RMSE_d_lambda} presents the evaluation of the RMSE with respect to the sparse factor $L$. The SNR is set to 5 $\mathrm{dB}$ in this simulation. We can observe that the conventional OA scheme fails to estimate the DOAs when $L > 1$ due to the severe grating lobe issues, despite the enlarged array aperture. Conversely, the performance of the proposed algorithm continuously improves with the increase of $L$ under different directivity factors of $p = 2$, $p = 4$, and $p = 6$. Furthermore, a larger directivity factor $p$ yields better DOA estimation performance.
	
	\Cref{RA_RMSE_P} investigates the impact of the RA directivity factor $p$ on the RMSE performance. We can observe that for a fixed sparse factor, the sensing performance of both the UDA and USA  schemes gradually improves as $p$ increases. Furthermore, the USA scheme consistently outperforms the UDA scheme across the entire range of $p$.

	\vspace{-0.5em}
	
	\section{Conclusion}
	In this letter, we proposed a novel tensor decomposition-based DOA estimation algorithm for RA-enhanced USA sensing systems to tackle spatial undersampling issue. By formulating the received signals via successive rotations as a third-order tensor, we analytically demonstrate that the strict unimodality of the gain SVC effectively suppresses the inherent grating lobes of the array SVC. By leveraging this unique property, we combine the array and gain factor matrices via the Kronecker product to ensure an unambiguous DOA estimation, which achieves robust and high precision sensing performance that significantly surpasses that of conventional UDA and OA systems.
	
	\vspace{-0.5em}

	% trigger a \newpage just before the given reference
	% number - used to balance the columns on the last page
	% adjust value as needed - may need to be readjusted if
	% the document is modified later
	%\IEEEtriggeratref{8}
	% The "triggered" command can be changed if desired:
	%\IEEEtriggercmd{\enlargethispage{-5in}}
	
	% references section
	
	% can use a bibliography generated by BibTeX as a .bbl file
	% BibTeX documentation can be easily obtained at:
	% http://mirror.ctan.org/biblio/bibtex/contrib/doc/
	% The IEEEtran BibTeX style support page is at:
	% http://www.michaelshell.org/tex/ieeetran/bibtex/
	%\bibliographystyle{IEEEtran}
	% argument is your BibTeX string definitions and bibliography database(s)
	%\bibliography{IEEEabrv,../bib/paper}
	%
	% <OR> manually copy in the resultant .bbl file
	% set second argument of \begin to the number of references
	% (used to reserve space for the reference number labels box)
	%\FloatBarrier
	%\vspace{-0.5em}
	\raggedbottom 
	\bibliographystyle{IEEEtran}
	\bibliography{mybib_Abbreviation_1}

	% that's all folks
\end{document}